\documentclass[sigconf,natbib=false]{acmart}
\AtBeginDocument{%
  }

\setcopyright{acmlicensed}
\copyrightyear{2026}
\acmYear{2026}
\acmDOI{XXXXXXX.XXXXXXX}
\RequirePackage[
  datamodel=acmdatamodel,
  style=acmnumeric,
  ]{biblatex}

\usepackage{algorithm}
\usepackage{graphicx}
\usepackage{algorithmic}
\usepackage{enumitem}

\usepackage{amsmath}
\usepackage{multirow}

\begin{document}

\title{JD-BP: A Joint-Decision Generative Framework for Auto-Bidding and Pricing}


\settopmatter{authorsperrow=4}
\author{Linghui Meng}
\authornote{Both authors contributed equally to this research.}
\email{menglinghui1@jd.com}
\affiliation{%
  \institution{JD.com}
  \country{China}
}

\author{Chun Gan}
\authornotemark[1]
\authornote{Corresponding author.}
\email{ganchun1@jd.com}
\affiliation{%
  \institution{JD.com}
  \country{China}
}

\author{Shengsheng Niu}
\email{niushengsheng@jd.com}
\affiliation{%
  \institution{JD.com}
  \country{China}
}

\author{Chengcheng Zhang}
\authornote{This work was performed during an internship at JD.com.}
\email{224010028@link.cuhk.edu.cn}
\affiliation{%
  \institution{The Chinese University of Hong Kong}
  \country{China}
}

\author{Chenchen Li}
\email{lichenchen31@jd.com}
\affiliation{%
  \institution{JD.com}
  \country{China}
}

\author{Chuan Yang}
\email{yangchuan56@jd.com}
\affiliation{%
  \institution{JD.com}
  \country{China}
}

\author{Yi Mao}
\email{maoyi3@jd.com}
\affiliation{%
  \institution{JD.com}
  \country{China}
}

\author{Xin Zhu}
\email{zhuxin3@jd.com}
\affiliation{%
  \institution{JD.com}
  \country{China}
}

\author{Jie He}
\email{hejie67@jd.com}
\affiliation{%
  \institution{JD.com}
  \country{China}
}

\author{Zhangang Lin}
\email{linzhangang@jd.com}
\affiliation{%
  \institution{JD.com}
  \country{China}
}

\author{Ching Law}
\email{lawching@jd.com}
\affiliation{%
  \institution{JD.com}
  \country{China}
}

\renewcommand{\shortauthors}{Linghui Meng, Chun Gan, Shengsheng Niu, Chengcheng Zhang et al.}

\begin{abstract}

Auto-bidding services optimize real-time bidding strategies for advertisers under key performance indicator (KPI) constraints such as target return on investment and budget. 
However, uncertainties such as model prediction errors and feedback latency can cause bidding strategies to deviate from ex-post optimality, leading to inefficient allocation. 
To address this issue, we propose \textbf{JD-BP}, a \textbf{J}oint generative \textbf{D}ecision framework for \textbf{B}idding and \textbf{P}ricing. 
Unlike prior methods, JD-BP jointly outputs a bid value and a pricing correction term that acts additively with the payment rule such as GSP. 
To mitigate adverse effects of historical constraint violations, we design a memory-less Return-to-Go that encourages bidding actions to focus on future value maximization, while the accumulated bias is handled by the pricing correction. 
Moreover, a trajectory augmentation algorithm is proposed to generate joint bidding-pricing trajectories from a
(possibly arbitrary) base bidding policy, enabling efficient plug-and-play deployment of our algorithm from existing RL/generative bidding models.
Finally, we employ an Energy-Based Direct Preference Optimization method in conjunction with a cross-attention module to enhance the joint learning performance of bidding and pricing correction.
Offline experiments on the AuctionNet dataset demonstrate that JD-BP achieves state-of-the-art performance. 
Online A/B tests on a major e-commerce platform demonstrate significant improvements: a 4.70\% increase in ad revenue and a 6.48\% increase in target cost.
\end{abstract}

\begin{CCSXML}
<ccs2012>
 <concept>
  <concept_id>00000000.0000000.0000000</concept_id>
  <concept_desc>Do Not Use This Code, Generate the Correct Terms for Your Paper</concept_desc>
  <concept_significance>500</concept_significance>
 </concept>
 <concept>
  <concept_id>00000000.00000000.00000000</concept_id>
  <concept_desc>Do Not Use This Code, Generate the Correct Terms for Your Paper</concept_desc>
  <concept_significance>300</concept_significance>
 </concept>
 <concept>
  <concept_id>00000000.00000000.00000000</concept_id>
  <concept_desc>Do Not Use This Code, Generate the Correct Terms for Your Paper</concept_desc>
  <concept_significance>100</concept_significance>
 </concept>
 <concept>
  <concept_id>00000000.00000000.00000000</concept_id>
  <concept_desc>Do Not Use This Code, Generate the Correct Terms for Your Paper</concept_desc>
  <concept_significance>100</concept_significance>
 </concept>
</ccs2012>
\end{CCSXML}

\ccsdesc[500]{Information systems~Online advertising}
\ccsdesc[500]{Computing methodologies~Machine learning}

\keywords{Auto-Bidding, Online Advertising, Joint-Decision Framework, Generative Model}


\maketitle

\section{Introduction}

Auto-bidding services have become integral to modern online advertising ecosystems---one of the largest-scale decision-making and data-mining applications on the Web---acting as intelligent agents that place bids on behalf of advertisers to maximize value under KPI constraints such as target cost-per-click (CPC) and return on investment (ROI)~\cite{Evans2009TheOA,Wang2015RealTimeBA,Lv2022UtilityMO,Balseiro2021TheLO}. From a mathematical perspective, auto-bidding problem is typically formulated as a constrained optimization task where the objective is to maximize total advertiser value subject to predefined KPI constraints. When future values and the advertising environment are known, exact optimal bidding formula can be derived through linear programming formulations and Lagrangian duality methods~\cite{aggarwal2019autobidding,he2021unified}.

However, the practical implementation of auto-bidding faces significant challenges due to the stochastic and dynamic nature of advertising environments. Two primary sources of uncertainty dominate: (1) model prediction errors in estimating click-through rates (CTR) and conversion rates (CVR), and (2) feedback latency where conversion events may be observed hours or days after the initial impression. These uncertainties necessitate continuous adjustment of bidding strategies based on real-time performance feedback, transforming the problem into an online decision-making process with partial and delayed observability.

To address this adaptive control challenge, the research community has explored various approaches. Early work employed classical control methods such as Proportional-Integral-Derivative (PID) controllers and model predictive control~\cite{zhang2022control}. More recently, reinforcement learning (RL) techniques have gained prominence for their ability to learn optimal policies through interaction with the environment~\cite{mou2022sustainable,Cai2017RealTimeBB,Ye2020DeepRL}. The latest advances leverage generative models, including Decision Transformers (DT)~\cite{Chen2021DecisionTR,Gao2025GenerativeAW,Li2024GASGA} and diffusion models~\cite{Guo2024AIGBGA,Li2025GenerativeAI,Peng2025ExpertGuidedDP}, which can capture complex temporal dependencies in historical bidding data and generate more robust policies.

Despite these methodological advancements, a fundamental limitation persists in current approaches:  the bidding action is learned to simultaneously achieve two distinct objectives—value maximization and constraint satisfaction. This coupling creates inherent inefficiencies when KPI constraints are violated during intermediate steps. Consider a scenario where an advertiser sets the target ROI to 10 while the observed real-time ROI at a certain timestep equals 6 due to either conversion delays or prediction inaccuracies. Conventional auto-bidding algorithms typically respond by reducing future bid values to mitigate constraint violation risks~\cite{Zhang2014OptimalRB,Wu2018BudgetCB}. This conservative adjustment, while protecting against further constraint violations, may cause the advertiser to miss valuable conversion opportunities. More critically, such bidding strategy could lead to allocation inefficiency at the market level, where agents with lower true values may outbid those with genuinely higher values, distorting the auction mechanism’s intended efficiency properties.

The core problem lies in the temporal misalignment between bidding actions and their consequences. When an agent attempts to compensate for historical constraint violations through future bidding decisions, it creates a feedback loop that distorts the relationship between bid values and true advertiser valuations. Such misalignment is particularly problematic in second-price auction environments where the pricing mechanism depends on competitors’ bids, creating complex strategic interactions that single-action approaches cannot adequately address.

To overcome these limitations, we propose a paradigm shift from single-action to dual-action optimization. Our key insight is that value maximization and constraint compensation should be decoupled into separate but coordinated actions. We introduce a novel joint optimization framework that incorporates both bidding actions (controlling auction participation) and pricing actions (adjusting effective costs through mechanism correction). The pricing correction term acts additively on the underlying auction mechanism (such as Generalized Second Price), allowing the system to address historical constraint violations without distorting current bidding strategies that should reflect true advertiser valuations.

We implement this framework through \textbf{JD-BP} (Joint Decision framework for Bidding and Pricing), a generative decision-making model based on the Decision Transformer architecture. The model features several innovative design elements: First, we introduce a pricing correction term that operates additively during the payment settlement phase. To effectively learn the joint bidding and pricing decisions, we design a memory-less Return-to-Go (RTG) that excludes past constraint violation signals from the bidding decision process, ensuring bidding actions focus on future value maximization. We further incorporate a cross-attention module to enable pricing adjustments to perceive bidding actions. Since initial deployment lacks training data, we develop a trajectory augmentation algorithm that generates joint bidding-pricing trajectories from a (possibly arbitrary) base bidding policy, enabling efficient plug-and-play deployment of our algorithm from existing bidding models. Finally, we propose an energy-based Direct Preference Optimization (DPO) fine-tuning~\cite{Rafailov2023DirectPO} using trajectories that pair high-reward outcomes with appropriate bidding-pricing action combinations, allowing our model to learn preference distinctions between different types of corrective actions.

Our contributions are summarized as follows:
\begin{itemize}[leftmargin=*, topsep=0pt, partopsep=0pt, parsep=0pt, itemsep=2pt]
    \item We propose \textbf{JD-BP}, a novel joint optimization framework decoupling value maximization (bidding) from historical constraint compensation (pricing). To the best of our knowledge, this is the first generative framework treating pricing mechanisms as an actionable dimension in auto-bidding.
    \item Rather than simply applying standard generative models, we introduce a \textbf{memory-less RTG} to decouple the sequential credit assignment, and design a \textbf{Gate-Selected Cross-Attention (GCA)} architecture to ensure pricing actions are causally conditioned on bidding outcomes.
    \item To overcome the limitation of categorical-based Direct Preference Optimization (DPO), we formulate an \textbf{Energy-Based Continuous DPO} customized for deterministic regression tasks in auto-bidding, allowing the model to distinguish and align with high-reward trajectories without artificial distribution assumptions.
    \item In addition to offline experimental validation achieving state-of-the-art results, we deploy JD-BP on a leading global e-commerce platform, resulting in a 4.70\% increase in ad revenue and a 6.48\% increase in target cost.
\end{itemize}

\section{Preliminary}
In this section, we first present the classical formulation of an auto-bidding task as a constrained optimization problem. 
We then discuss the closely related field of online decision-making and its relevance to the solution of the auto-bidding problem.

\subsection{Auto-Bidding Problem}
Consider the scenario where \textit{N} bidding opportunities arrive sequentially throughout the day, the classic auto-bidding problem can be formulated as follows:
\begin{equation} \label{ori-pro}
    \begin{aligned}
        \max_{x_i}\ & \textstyle\sum_i^N x_i v_i \\
        \text{s.t.}\ & \textstyle\sum_i^N x_i c_i \leq B, \quad x_i \in \{0,1\},\ \forall i, \\
         & \textstyle\sum_i^N x_i c_i \leq \rho_j \textstyle\sum_i^N x_i \gamma_{ij},\ \forall j.
    \end{aligned}
\end{equation}
where $v_i$ represents the advertiser's estimated value for the i-th impression, $c_i$ denotes actual payment (i.e., cost) charged to the advertiser and $x_i$ indicates whether the advertiser has won this impression opportunity under a given bidding mechanism such as the second-price auction, the most common mechanism in industrial practice.
The model considers two types of constraints to ensure performance. One is the budget constraint, where $B$ represents the total advertising budget. The second type pertains to KPI constraints, such as CPC and ROI constraints. Here $\gamma_{ij}$ denotes the value of the $j$-th KPI metric associated with impression $i$ (e.g., the click indicator for CPC or the conversion value for ROI), and $\rho_j$ is the corresponding target bound; for the ROI constraint, $\gamma_{ij}=v_i$ and $\rho_j$ is the target cost-to-value ratio.

To solve this large-scale 0-1 programming problem, prior work typically relaxes the integer constraint on $x_i$, transforming it into a linear program. Moreover, in practical industrial systems, budget pacing is often handled by a separate control module, allowing the optimization to focus primarily on KPI constraints. When future values and the advertising environment are known, the optimal bidding strategy can be derived as:
\begin{equation}
\label{opt-bid}
bid_i = \lambda_0 v_i +  \sum_j \rho_j \lambda_j \gamma_{ij}
\end{equation}
where $\lambda_0$ and $\lambda_j$ denote the dual variables associated with the budget constraint and the KPI constraints, respectively~\cite{aggarwal2019autobidding,he2021unified}.

Without loss of generality, the subsequent discussion will focus on the problem considering both budget and ROI constraints, which is the most prevalent scenario in industrial practice.

\subsection{Online Decision-Making for Auto-Bidding}

The closed-form optimal solution given by Eq.~\eqref{opt-bid} depends on two essential assumptions: 1) knowing its future values $\{v_i\}$ and 2) the opponents' values and bidding strategies are stationary~\cite{Jin2018RealTimeBW}.
However, the advertising environment of online advertising is highly competitive and uncertain, mainly due to the following three reasons.

First, traffic distribution may shift significantly due to factors such as promotions, weather events, or other external stimuli.
Second, conversion latency is a common phenomenon in e-Commerce that could bring challenge to online decision-making. This refers to the delay between a user’s click and subsequent conversion (e.g., purchase)~\cite{Chapelle2014ModelingDF,Chan2023CapturingCR}. 
Last but not least, despite improvements in CTR and CVR prediction models, the prediction error of machine learning models always exists~\cite{Richardson2007PredictingCE}. 
Consequently, auto-bidding is often modeled as an online Decision-Making problem in practice~\cite{Amin2012BudgetOF}.

To characterize the gap between the posterior outcome and the optimal outcome in hindsight, regret is measured along two dimensions: value maximization and constraint violation.

Formally, 
let $\mathcal{E}$
be the advertising environment
and $T$ be the total timesteps, respectively.
For each time step $t \in \{1, 2, \cdots, T\}$, the auto-bidding agent receives a state $s_t = [h_t, c_t, x_t]\in \mathcal{S}$, where $h_t$ contains information from its bidding history such as cost, remaining budget, $c_t$ represents target constraints and $x_t$ describes the traffic distribution.
Let $\{x_t^\star\}$ be the optimal allocation in hindsight. The two regret dimensions can then be made precise: a \emph{value regret} $\mathcal{R}_v=\sum_t x_t^\star v_t-\sum_t x_t v_t$, measuring the value forgone relative to the hindsight optimum, and a \emph{constraint-violation regret} $\mathcal{R}_c=\max\!\big(0,\ \sum_t x_t c_t-\rho\sum_t x_t v_t\big)$, penalizing cost that exceeds the target ratio $\rho$. An ideal policy should drive both to zero simultaneously. In practice, however, the two are tightly entangled: when a historical violation has already inflated $\mathcal{R}_c$, a single-action bidder can only suppress it by bidding conservatively, which in turn sacrifices future value and enlarges $\mathcal{R}_v$. This intrinsic coupling between value maximization and constraint compensation is precisely the tension that the joint bidding--pricing design developed next is meant to resolve.

\section{Methodology}
In this chapter, we first present the mathematical formulation for the joint modeling of bidding and pricing, along with the optimal closed-form solution under perfect information of future bidding opportunities. Given the powerful sequential decision-making capabilities of the Decision Transformer, we adopt it as our backbone architecture to solve this joint decision-making problem. We propose a memoryless offline trajectory generation method to produce high-quality trajectories with additional pricing action for model training. Notably, this method is applicable to all baseline models that consider only bidding actions. Finally, we introduce a DPO fine-tune module to further enhance the model's performance. The overall algorithm architecture is illustrated in Fig.~\ref{fig:algorithm_stru}.

\subsection{Joint Decision Framework}
\label{sec:joint}

Following the preliminaries in Eq.~\eqref{ori-pro}, we index the future bidding opportunities by $t=t_m,\dots,T$, where $t_m$ is the current decision step; $x_t\in\{0,1\}$, $v_t$, $c_t$ retain the meanings defined in Eq.~\eqref{ori-pro}, and $B_{t_m}$ is the remaining budget at $t_m$. The core of our framework is an additive \emph{pricing correction} $y_t\le 0$ applied at settlement, so that the actual charge for a won opportunity is $p_t=c_t+y_t$.

\noindent\textbf{Historical deficit.}
Due to prediction errors and delayed feedback, the ROI constraint may be violated over the elapsed steps $t<t_m$. We quantify this violation by the \emph{historical deficit}
\begin{equation}\label{eq:bal}
bal=\max\!\Big(0,\ \textstyle\sum_{t<t_m} x_t c_t-\rho\sum_{t<t_m} x_t v_t\Big)\ \ge 0,
\end{equation}
i.e.\ the amount by which the historical cost \emph{exceeds} the level $\rho\sum_{t<t_m} x_t v_t$ permitted by the target ratio $\rho$. At step $t_m$ the history is fully observed, so $bal$ is a known non-negative constant.

\noindent\textbf{Joint problem.}
Conditioned on $bal$, we jointly optimize the future bids $\{x_t\}$ and pricing corrections $\{y_t\}$:
\begin{equation}\label{jom}
    \begin{aligned}
        \max_{x_t,\,y_t}\ & \textstyle\sum_{t=t_m}^{T} x_t v_t \\
        \text{s.t.}\ & \textstyle\sum_{t=t_m}^{T} x_t (c_t+y_t)\le B_{t_m} \\
         & \textstyle\sum_{t=t_m}^{T} x_t c_t\ \le\ \rho\sum_{t=t_m}^{T} x_t v_t \\
         & \textstyle\sum_{t=t_m}^{T} x_t y_t + bal = 0 \\
         & x_t\in\{0,1\},\ y_t\le 0,\ \forall t\ge t_m .
    \end{aligned}
\end{equation}
The first constraint caps the \emph{actual} expenditure (after correction) by the remaining budget; the second enforces the \emph{future} ROI target; the third requires the corrections to refund exactly the historical deficit, i.e.\ $\sum_{t\ge t_m} x_t y_t=-bal\le 0$ (a net refund). Intuitively, the pricing channel absorbs the historical violation so that bidding is free to maximize future value.

\begin{theorem}\label{thm:equiv}
Given the future opportunities after $t_m$, problem~\eqref{jom} is equivalent to the single-action bidding problem
\begin{equation}\label{eq:reduced}
\begin{aligned}
\max_{x_t}\ & \textstyle\sum_{t=t_m}^{T} x_t v_t \\
\text{s.t.}\ & \textstyle\sum_{t=t_m}^{T} x_t c_t \le B_{t_m}+bal \\
& \textstyle\sum_{t=t_m}^{T} x_t c_t \le \rho \sum_{t=t_m}^{T} x_t v_t,\quad x_t\in\{0,1\}.
\end{aligned}
\end{equation}
Problem~\eqref{eq:reduced} shares the same form as the classic bidding problem~\eqref{ori-pro}; hence its optimal policy retains the closed-form bid of Eq.~\eqref{opt-bid}, $bid_t=\hat{\lambda}_0 v_t+\sum_j \rho_j \hat{\lambda}_j \gamma_{tj}$, and a feasible pricing rule is the uniform refund $y_t=-\,bal\big/\!\sum_{t=t_m}^{T} x_t$.
\end{theorem}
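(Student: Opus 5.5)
Two problems are equivalent if they have the same optimal value and optimal solutions can be mapped between them. I will show this by projecting out the pricing variables. Problem~\eqref{jom} has decision variables $(\{x_t\},\{y_t\})$, but $y$ appears only in the budget constraint and in the refund equality. Substituting $\sum_{t\ge t_m} x_t y_t=-bal$ from the third constraint into the first turns it into
\[
\textstyle\sum_{t=t_m}^{T} x_t c_t \le B_{t_m}+bal .
\]
The ROI constraint and the objective do not involve $y$ at all. So every feasible pair $(x,y)$ of~\eqref{jom} yields an $x$ that is feasible for~\eqref{eq:reduced}, with the same objective value. This is the easy direction.

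For the converse, take any $x$ feasible for~\eqref{eq:reduced} and construct a $y$. The natural choice is the uniform refund $y_t=-bal/\sum_{s\ge t_m} x_s$ on won opportunities; the value of $y_t$ where $x_t=0$ is irrelevant.

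\begin{itemize}
\item $y_t\le 0$ because $bal\ge 0$.
\item Summing gives $\sum_t x_t y_t=-bal$, so the refund constraint holds.
\item Reversing the substitution, the actual expenditure $\sum_t x_t(c_t+y_t)=\sum_t x_t c_t-bal\le B_{t_m}$.
\end{itemize}

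Hence the pair is feasible for~\eqref{jom} with the same objective value. The two feasible sets have identical projections onto $x$, so the optimal values coincide and optimal $x$'s correspond. This argument also verifies the claimed feasible pricing rule.

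The main obstacle is the degenerate case $\sum_{t\ge t_m} x_t=0$, where the uniform refund formula is undefined.
\begin{itemize}
\item If $bal=0$, set $y\equiv 0$; then the refund equality holds trivially and nothing breaks.
\item If $bal>0$ and no future opportunity is won, the refund equality $\sum_t x_t y_t=-bal$ cannot hold. So $x\equiv 0$ is feasible for~\eqref{eq:reduced} but not for~\eqref{jom}.
\end{itemize}
I would handle this either by stating the equivalence under the mild assumption that the optimum wins at least one opportunity, or by noting that whenever any nonzero feasible $x$ exists, the zero allocation is not optimal (assuming positive values), so the optimal sets still agree. I would remark explicitly that the relaxed version behaves the same way.

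Finally, \eqref{eq:reduced} has exactly the form of~\eqref{ori-pro} restricted to $t\ge t_m$, with budget $B_{t_m}+bal$ and ROI constraint $\gamma_{tj}=v_t$. I would therefore invoke the same LP relaxation and Lagrangian duality argument cited for Eq.~\eqref{opt-bid} \cite{aggarwal2019autobidding,he2021unified}. This gives the closed-form bid with new dual variables $\hat\lambda_0,\hat\lambda_j$ that correspond to the enlarged budget. I would not re-derive it, only note that the dual multipliers change while the functional form does not.
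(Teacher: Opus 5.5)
Your proof is correct and takes the same route as the paper: you substitute the refund equality $\sum_{t\ge t_m} x_t y_t=-bal$ into the post-correction budget constraint to eliminate $y$, use the uniform refund for the converse direction, and cite the LP/Lagrangian-duality argument to get the closed-form bid with new multipliers $\hat\lambda_0,\hat\lambda_j$. You are more careful than the paper on one point: its proof invokes $y_t=-bal/\sum_{t=t_m}^{T} x_t$ while silently assuming the won set is nonempty, whereas you correctly note that for $bal>0$ the allocation $x\equiv 0$ is feasible for the reduced problem but not for the joint one, so the equivalence needs a mild nondegeneracy assumption (for instance, positive values and the existence of some nonzero feasible $x$), which you handle correctly.
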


\begin{proof}
Substituting the deficit-compensation constraint $\sum_{t\ge t_m} x_t y_t=-bal$ into the budget constraint of~\eqref{jom} eliminates $y_t$:
\begin{equation}
\begin{aligned}
\sum_{t\ge t_m} x_t(c_t+y_t)\le B_{t_m}
&\ \Longleftrightarrow\ \sum_{t\ge t_m} x_t c_t - bal \le B_{t_m} \\
&\ \Longleftrightarrow\ \sum_{t\ge t_m} x_t c_t \le B_{t_m}+bal,
\end{aligned}
\nonumber
\end{equation}
which yields~\eqref{eq:reduced}. We stress that the augmented bound $B_{t_m}+bal$ does \emph{not} grant extra spending power: since the corrections refund exactly $bal$, the actual expenditure $\sum x_t(c_t+y_t)=\sum x_t c_t-bal$ still stays within the true budget $B_{t_m}$. The term $bal$ only reflects that, by deferring the historical over-payment to the pricing channel, the bidding policy is \emph{relieved} of compensating it and may bid as if it had $bal$ extra cost headroom.

Since~\eqref{eq:reduced} differs from~\eqref{ori-pro} only by the relaxed budget bound $B_{t_m}+bal$ and is linear in $x_t$, the Lagrangian-duality argument behind Eq.~\eqref{opt-bid} applies unchanged. Therefore the optimal bidding policy keeps the closed form of Eq.~\eqref{opt-bid}; only the dual variables differ (denoted $\hat{\lambda}_0,\hat{\lambda}_j$) because the budget is enlarged from $B_{t_m}$ to $B_{t_m}+bal$. Finally, any non-positive $\{y_t\}$ summing to $-bal$ over the won set satisfies the remaining constraints; the uniform refund $y_t=-bal/\sum_{t=t_m}^{T} x_t\le 0$ is one such choice.
\end{proof}

\begin{corollary}\label{cor:gap}
Let $V_O^\star$ and $V_J^\star$ be the optimal values of the original online bidding problem (without pricing correction) and the joint problem~\eqref{jom}. Then $V_O^\star\le V_J^\star$.
\end{corollary}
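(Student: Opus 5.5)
The plan is to compare feasible sets after reducing both problems to single-action form. By Theorem~\ref{thm:equiv}, $V_J^\star$ equals the optimal value of the reduced problem~\eqref{eq:reduced}. So it suffices to show that every allocation $\{x_t\}$ feasible for the original problem (without pricing correction) is also feasible for~\eqref{eq:reduced}. The inequality $V_O^\star\le V_J^\star$ then follows, because both problems maximize the same objective $\sum_{t\ge t_m} x_t v_t$.

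First I would make the original problem precise over the future horizon, conditioned on the same history. Without pricing, the agent must compensate the historical deficit through bidding alone. Its constraints are therefore $\sum_{t\ge t_m} x_t c_t\le B_{t_m}$ and the cumulative ROI constraint over the whole day. Since past cost exceeds $\rho\sum_{t<t_m}x_tv_t$ by $bal$, the cumulative constraint reads
\begin{equation}
\textstyle\sum_{t\ge t_m} x_t c_t + bal \le \rho\sum_{t\ge t_m} x_t v_t . \nonumber
\end{equation}
If $bal=0$ (history not in deficit), I would take the original future ROI constraint to be at least as strict as $\sum_{t\ge t_m}x_tc_t\le\rho\sum_{t\ge t_m}x_tv_t$. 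Any slack from a past surplus is discarded by the $\max(0,\cdot)$ in Eq.~\eqref{eq:bal}; if it were not, I would flag this as a modeling convention to state explicitly.

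Next, take any $x$ feasible for the original problem. For the budget, $\sum x_tc_t\le B_{t_m}\le B_{t_m}+bal$ because $bal\ge0$. For the ROI, $\sum x_tc_t\le \sum x_tc_t+bal\le\rho\sum x_tv_t$. Hence $x$ satisfies both constraints of~\eqref{eq:reduced}, and $V_O^\star\le V_J^\star$. Equivalently, one can read the argument in the joint problem~\eqref{jom} directly: any such $x$ together with the uniform refund $y_t=-bal/\sum x_t$ from Theorem~\ref{thm:equiv} is feasible there. The refund constraint is met by construction, and the actual spend is $\sum x_tc_t-bal\le B_{t_m}$.

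The main obstacle is the degenerate case where $bal>0$ but the candidate $x$ wins no future opportunities, so the refund constraint $\sum x_ty_t=-bal$ cannot be met and the uniform refund is undefined. In the original problem, however, such an $x$ is infeasible: the ROI inequality would read $bal\le0$, a contradiction. So the inclusion still holds, and I would note this explicitly. If the original problem has no feasible point at all, I would adopt the convention $V_O^\star=-\infty$, and the inequality holds trivially.
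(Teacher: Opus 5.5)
Your proposal is correct and follows the paper's proof: the paper also writes the no-pricing problem as the budget constraint plus the tightened ROI bound $\sum_{t\ge t_m}x_tc_t\le\rho\sum_{t\ge t_m}x_tv_t-bal$, shows its feasible set lies inside that of~\eqref{eq:reduced} using $bal\ge 0$, and concludes via Theorem~\ref{thm:equiv}. Your additions are worthwhile and sound. The direct feasibility check in~\eqref{jom} via the uniform refund, with the observation that $bal>0$ forces at least one win, closes a small hole: the reduction in Theorem~\ref{thm:equiv} silently admits $x\equiv 0$ when $bal>0$, although that point is infeasible for~\eqref{jom}. Your remark about discarding a historical surplus makes explicit the convention the paper adopts through the $\max(0,\cdot)$ in~\eqref{eq:bal}.
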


\begin{proof}
Without a pricing channel, the historical deficit must be repaid through future bidding, so the original problem carries the tightened ROI constraint
\begin{equation}\label{eq:orig}
\begin{aligned}
&\sum_{t\ge t_m} x_t c_t\le B_{t_m}, \\
&\sum_{t\ge t_m} x_t c_t\ \le\ \rho\sum_{t\ge t_m} x_t v_t-bal .
\end{aligned}
\end{equation}
Any solution feasible for~\eqref{eq:orig} satisfies $\sum x_t c_t\le B_{t_m}\le B_{t_m}+bal$ and $\sum x_t c_t\le\rho\sum x_t v_t-bal\le\rho\sum x_t v_t$, hence is feasible for~\eqref{eq:reduced}. As the two problems share the same objective and~\eqref{eq:reduced}$\equiv$\eqref{jom} by Theorem~\ref{thm:equiv}, the feasible region of the original problem is contained in that of the joint problem, giving $V_O^\star\le V_J^\star$. The gap is governed by $bal$: the larger the historical violation, the more the joint formulation relaxes the future bidding problem.
\end{proof}

This establishes the central motivation of JD-BP: rather than forcing the bidding policy to repay historical KPI violations (the $bal$ penalty in~\eqref{eq:orig}), the pricing correction absorbs the deficit, decoupling future value maximization from historical constraint satisfaction.

\noindent\textbf{Market-level efficiency and revenue.}
The benefit above extends from a single advertiser to the whole auction market. Consider a CPM second-price market with advertisers indexed by $m\in\mathcal{M}$. By the uniform-bidding optimum of Eq.~\eqref{opt-bid}, which for the ROI constraint reduces to a value-proportional bid, each advertiser $m$ bids $b_{im}=\alpha_m v_{im}$ on impression $i$ with a single multiplier $\alpha_m>0$; we adopt the standard large-market (non-atomic) regime in which each advertiser treats the realized cost process as exogenous.

\begin{lemma}\label{lem:mult}
Repaying the deficit through bidding (single-action regime) yields a multiplier $\alpha_m^{O}(bal_m)\le\alpha_m^{\circ}$, where $\alpha_m^{\circ}$ is the deficit-free multiplier used by JD-BP; equality holds iff $bal_m=0$, and $\alpha_m^{O}$ is strictly decreasing in $bal_m$.
\end{lemma}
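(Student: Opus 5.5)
In the large-market regime, each advertiser $m$ faces an exogenous cost process, so a uniform multiplier $\alpha$ determines a win set $W_m(\alpha)=\{i: \alpha v_{im}\ge c_{im}\}$. I would study the two aggregate functions
\[
C_m(\alpha)=\textstyle\sum_{i\in W_m(\alpha)} c_{im},\qquad V_m(\alpha)=\textstyle\sum_{i\in W_m(\alpha)} v_{im}.
\]
Both are nondecreasing in $\alpha$, since raising the multiplier only adds opportunities to the win set. Under the standard non-atomic assumption, I would take them to be continuous and strictly increasing on the relevant range, i.e.\ cost mass is positive near the margin.

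The key quantity is the ROI slack $S_m(\alpha)=\rho V_m(\alpha)-C_m(\alpha)$. Every newly won impression at multiplier $\alpha$ has $c_{im}\approx\alpha v_{im}$. This gives $dS_m=(\rho-\alpha)\,dV_m$, so $S_m$ is strictly increasing for $\alpha<\rho$ and strictly decreasing for $\alpha>\rho$.

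Next I would characterize the two multipliers, together with any budget cap, as largest feasible points.
\begin{itemize}
\item The deficit-free multiplier of JD-BP comes from problem~\eqref{eq:reduced} via Theorem~\ref{thm:equiv}. It is
\[
\alpha_m^{\circ}=\sup\{\alpha: S_m(\alpha)\ge 0,\ C_m(\alpha)\le B_{t_m}+bal_m\}.
\]
\item The single-action multiplier comes from the tightened system~\eqref{eq:orig}. It is
\[
\alpha_m^{O}(bal_m)=\sup\{\alpha: S_m(\alpha)\ge bal_m,\ C_m(\alpha)\le B_{t_m}\}.
\]
\end{itemize}
The objective $V_m(\alpha)$ is monotone, so the optimum in each case is the largest feasible multiplier. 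For this I would invoke the reduction to Eq.~\eqref{opt-bid}, which says the optimum is uniform bidding.

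The feasible set for $\alpha_m^{O}$ is contained in that for $\alpha_m^{\circ}$, by the same containment used in Corollary~\ref{cor:gap}. Hence $\alpha_m^{O}\le\alpha_m^{\circ}$. When $bal_m=0$ the two sets coincide, so the multipliers are equal.

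For strict monotonicity in $bal_m$, I would work on the decreasing branch $\alpha>\rho$, where the ROI constraint binds. There $\alpha_m^{O}(bal)=S_m^{-1}(bal)$, and $S_m$ is continuous and strictly decreasing, so $\alpha_m^{O}$ is strictly decreasing in $bal$. This also yields the converse of the equality case: if $bal_m>0$, then $S_m(\alpha_m^{O})=bal_m>0$, while $S_m(\alpha_m^{\circ})=0$ (or $\alpha_m^{\circ}$ is pinned by a looser budget). Therefore $\alpha_m^{O}<\alpha_m^{\circ}$.

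The main obstacle is this last step: it requires the ROI constraint, not the budget, to be the binding one in the single-action regime, and it requires $S_m$ to be strictly monotone at the optimum. I would handle it by stating explicitly the regularity conditions that the non-atomic regime supplies:
\begin{itemize}
\item continuous, strictly increasing $C_m$ and $V_m$;
\item an interior optimum on the branch $\alpha>\rho$;
\item a non-binding budget, which the paper already assumes is delegated to pacing.
\end{itemize}
In the budget-binding case the two multipliers can coincide even when $bal_m>0$. Since the ``iff'' claim would then fail, this case would have to be excluded by the same hypothesis.
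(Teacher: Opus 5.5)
Your proposal is correct and follows the paper's route: the single-action feasible set is contained in JD-BP's (Corollary~\ref{cor:gap}), and each multiplier is read as the largest scale at which the binding constraint still holds. It usefully makes explicit, through the unimodal slack $S_m$ with $dS_m=(\rho-\alpha)\,dV_m$, the strictness and the monotonicity in $bal_m$ that the paper only asserts.

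One small correction: under your own definition of $\alpha_m^{\circ}$ (budget $B_{t_m}+bal_m$) with strictly increasing $C_m$, a binding budget does not make the two multipliers coincide. Indeed $C_m^{-1}(B_{t_m}+bal_m)>C_m^{-1}(B_{t_m})$, and $S_m(\alpha_m^{O})\ge bal_m>0$ pushes the zero of $S_m$ strictly above $\alpha_m^{O}$. What that case actually breaks is the strict decrease of $\alpha_m^{O}$ in $bal_m$, since $\alpha_m^{O}$ stays at $C_m^{-1}(B_{t_m})$ there. So your non-binding-budget hypothesis is still needed, but for that clause rather than for the ``iff''.
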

\begin{proof}
Relative to JD-BP, the single-action regime both lowers the budget (from $B_{t_m}+bal_m$ to $B_{t_m}$) and tightens the ROI bound by $bal_m$ (Eq.~\eqref{eq:orig}), so its feasible set is contained in JD-BP's (Corollary~\ref{cor:gap}). The bidding multiplier is the largest scale at which the binding constraint still holds; relaxing that constraint by removing $bal_m$ can only enlarge this scale, so $\alpha_m^{O}\le\alpha_m^{\circ}$, strictly when $bal_m>0$.
\end{proof}

Writing $\kappa_m=\alpha_m^{O}/\alpha_m^{\circ}\in(0,1]$, the single-action bid is $b_{im}^{O}=\kappa_m\,\alpha_m^{\circ}v_{im}$: the deficit-free bid $\alpha_m^{\circ}v_{im}$ perturbed by a \emph{value-independent}, advertiser-specific factor $\kappa_m$, which JD-BP removes ($\kappa_m\equiv1$).

\begin{theorem}[Allocation efficiency and revenue]\label{thm:market}
Let $W=\sum_i v_{i,w(i)}$ be the realized welfare ($w(i)$ is the winner of impression $i$) and $R$ the legitimate platform revenue, i.e.\ payments net of the returned historical over-collection $\sum_m bal_m$. Then JD-BP weakly improves both allocation efficiency and revenue: its allocation coincides with the deficit-free (constrained-efficient) one, and $R^{O}\le R^{J}$.
\end{theorem}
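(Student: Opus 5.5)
The plan has two parts, one for efficiency and one for revenue. Both rest on Lemma~\ref{lem:mult} and the factorization $b_{im}^{O}=\kappa_m\alpha_m^{\circ}v_{im}$.

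\textbf{Efficiency.} I compare the two allocations impression by impression. Under JD-BP, Theorem~\ref{thm:equiv} lets each advertiser bid with the deficit-free multiplier, so $\kappa_m\equiv 1$. The winner of impression $i$ is therefore $w^{J}(i)=\arg\max_m \alpha_m^{\circ}v_{im}$. This is exactly the allocation of the deficit-free (constrained-efficient) benchmark, since that benchmark is defined as the outcome of every advertiser solving Eq.~\eqref{eq:reduced}, whose optimum is the uniform bid of Eq.~\eqref{opt-bid}. The first claim follows by construction.

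To get the \emph{weak improvement} statement, I show that the single-action winner $w^{O}(i)=\arg\max_m \kappa_m\alpha_m^{\circ}v_{im}$ can only depart from this benchmark through the value-independent distortions $\kappa_m$. The argument is as follows. If all $\kappa_m$ are equal, the argmax is unchanged. Otherwise, some impression $i$ can be reassigned to an advertiser $m'$ with $\alpha_{m'}^{\circ}v_{im'}<\alpha_{w}^{\circ}v_{iw}$. Because each advertiser's deficit-free scaled-value ranking is the constrained-efficient benchmark, any reassignment weakly lowers the benchmark objective. Summing over $i$ then gives $W^{O}\le W^{J}$ in the constrained sense. I will phrase this as ``coincides with the constrained-efficient allocation'' so the claim does not require raw welfare domination.

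\textbf{Revenue.} I use the second-price payment $p_i=\max_{m\ne w(i)} b_{im}$ together with monotonicity. By Lemma~\ref{lem:mult}, $b^{O}_{im}=\kappa_m b^{J}_{im}\le b^{J}_{im}$ for every $m,i$. Second-highest bids are monotone nondecreasing in every bid, so $p_i^{O}\le p_i^{J}$ for each $i$, and summing gives gross payments $P^{O}\le P^{J}$. It remains to handle the refunds. Under JD-BP the pricing corrections return exactly $\sum_m bal_m$ (the third constraint of \eqref{jom}). Under the single-action regime the same historical over-collection $\sum_m bal_m$ was already taken, so it must also be netted out for $R$ to be ``legitimate''. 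Subtracting the same constant from both sides gives $R^{O}=P^{O}-\sum_m bal_m\le P^{J}-\sum_m bal_m=R^{J}$. The non-atomic assumption lets me treat each advertiser's cost process as exogenous, so the multipliers $\alpha_m^{\circ}$ do not shift when the others move from $\kappa_m$ to $1$. This is what justifies the pointwise bid comparison.

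\textbf{Main obstacle.} The hard step is the efficiency comparison. Value-independent heterogeneous scaling does not pointwise reduce raw welfare, so I need to argue carefully that the relevant benchmark is the deficit-free multiplier profile and that JD-BP reproduces it exactly. A second, smaller issue is the accounting that nets the same $\sum_m bal_m$ from both regimes.
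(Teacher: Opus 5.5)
Your proposal follows the paper's proof essentially step for step. For efficiency, you use the invariance of $\arg\max_m b_{im}$ under a common positive scaling, so that misallocation can come only from dispersion of the value-independent factors $\kappa_m$, which JD-BP removes by setting $\kappa_m\equiv 1$; for revenue, you combine Lemma~\ref{lem:mult} with monotonicity of the second-highest bid and treat the refunded $\sum_m bal_m$ as non-legitimate revenue. Your explicit netting of the same constant from both regimes, and your caveat that the deficit-free allocation maximizes only a multiplier-weighted objective rather than raw $W$, are slightly more careful than the paper's wording, but the argument is the same.
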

\begin{proof}
A \emph{common} positive scaling of all bids leaves every auction outcome $\arg\max_m b_{im}$ unchanged, so the single-action allocation departs from the deficit-free one only through the \emph{dispersion} of the value-independent factors $\{\kappa_m\}$: whenever these are non-degenerate, an impression can be diverted from the advertiser the deficit-free market would serve to one carrying a larger historical deficit---a misallocation uncorrelated with value. JD-BP sets $\kappa_m\equiv1$ and thus restores exactly the deficit-free, constrained-efficient allocation. For revenue, $\alpha_m^{\circ}\ge\alpha_m^{O}$ (Lemma~\ref{lem:mult}) weakly raises every bid, so each second price $c_i=\max_{k\ne w(i)}b_{ik}$---an order statistic, monotone in the bids---weakly increases; as the refund returns only the non-legitimate over-collection $\sum_m bal_m$, the legitimate component $\sum_i c_i$ is weakly larger, further reinforced by the fuller budget utilization of the memoryless design.
\end{proof}

Hence relocating constraint compensation to a non-distortionary pricing channel restores value-revealing bids, improving allocation efficiency and platform revenue \emph{simultaneously}.

\begin{figure*}[tp]
    \centering
    \includegraphics[width=1.0\linewidth]{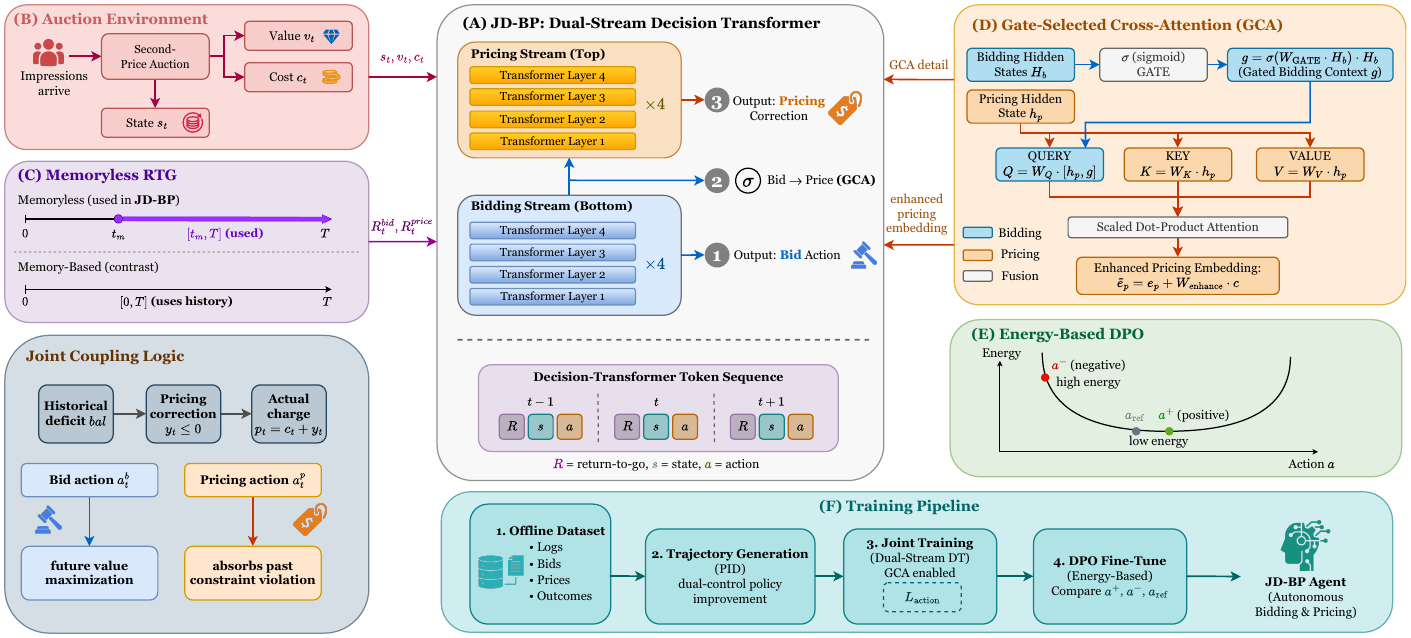}
      \caption{Overview of the JD-BP framework.
        \textbf{(a)} Dual-Stream Decision Transformer: parallel Bidding (blue) and Pricing (orange) streams over a return/state/action token sequence.
        \textbf{(b)} Auction Environment emitting per-step value $v_t$ and cost $c_t$.
        \textbf{(c)} Memoryless RTG conditioned only on the future $[t_m, T]$.
        \textbf{(d)} GCA: a gate $\sigma$ fuses bidding context into the pricing query.
        \textbf{(e)} Energy-Based DPO placing $a^+$/$a^-$ at low/high energy w.r.t.\ $a_{\mathrm{ref}}$.
        \textbf{(f)} Offline training pipeline ending at the deployed JD-BP agent.}
    \label{fig:algorithm_stru}
\end{figure*}

\subsection{Trajectory Generation}
In conventional DT based auto-bidding systems, trajectories follow the standard formulation:
\[
\tau = \{(r_t, s_t, a_t^{\text{bid}})\}_{t=1}^T
\]
where \(s_t \in \mathcal{S}\) represents the state of the system, $a_t^{\text{bid}} \in \mathcal{A}_{\text{bid}}$ denotes the bidding action, and $r_t \in \mathbb{R}$ is the RTG.

In our joint decision-making framework, we extend this formulation to incorporate pricing correction actions, resulting in an augmented trajectory structure:
\[
\tau_{\text{joint}} = \{(R_t^{\text{bid}}, R_t^{\text{price}}, s_t, a_t^{\text{bid}}, a_t^{\text{price}}, A_t)\}_{t=1}^T
\]
where \(R_t^{\text{bid}}\) and \(R_t^{\text{price}}\) represent the target returns for bidding and pricing objectives respectively, and \(a_t^{\text{price}} \in \mathcal{A}_{\text{price}}\) denotes the pricing correction action.

While online interaction with the environment provides one mechanism for data collection, we propose a more efficient offline trajectory generation procedure based on the optimal solution derived from our theoretical analysis. Specifically, we randomly select a trajectory from the set generated by the base bidding policy. Assuming there exists a constraint violation $bal_t$ at time step $t$ and we aim to compensate for it within the remaining steps, we compute $a_t^{\text{price}}$ via a PID controller~\cite{Ziegler1942OptimumSF}, chosen for its simplicity, interpretability, and proven efficacy at dynamically adjusting actions to improve target achievement while producing diverse, high-quality pricing-action data; any algorithm meeting these criteria could be substituted. This allows us to update the subsequent state and, based on this updated state and the base bidding policy, generate the next bidding action. We repeat the above steps until a complete trajectory is generated and calculate the corresponding $R_t^{\text{bid}}$ and $R_t^{\text{price}}$. \par
Simultaneously, at each step of the loop, we additionally generate a new trajectory without pricing actions following the base bidding policy. This enables computation of the advantage between applying versus not applying pricing actions in the current state, thereby enhancing the model's exploration capability during training. We detail the trajectory generation process in Algorithm~\ref{alg-trg}.

\subsubsection{Memoryless RTG}
In our model, the introduction of pricing actions decouples bidding decisions from historical constraint violations, allowing the bidding policy to focus solely on future value maximization under future constraint satisfaction. Consequently, differing from the conventional definition of RTG in prior DT-related research, we introduce a memoryless RTG to guide the model in generating bidding actions that prioritize future constraint fulfillment and value maximization. The memoryless RTG formulations for bidding and pricing are defined as:
\begin{equation}
R_t^b =\min\Big(\big(\frac{\sum_{i=t}^T v_i}{\sum_{i=t}^T{c_i}}\big)^2, 1\Big) \cdot \sum_{i=t}^T v_i \\
\end{equation}

\begin{equation}
R_{t}^p = \phi_t^{\,\omega_t}
\end{equation}

\begin{equation}
\phi_t = \min\!\Big(\frac{\sum_{i=1}^T v_i}{\sum_{i=1}^T \hat{c}_i},\ \frac{\sum_{i=1}^T \hat{c}_i}{\sum_{i=1}^T v_i}\Big)
\end{equation}

\begin{equation}
\omega_t = 2\left|\frac{\sum_{i=t}^T y_i}{\sum_{i=1}^T v_i-\sum_{i=1}^T c_i-\sum_{i=1}^t y_i}-1\right|+1
\end{equation}

where $\hat{c}_i = c_i + y_i$ is the post-correction charge, and $c_i$, $v_i$, $y_i$ denote the charge before pricing correction, the value of the $i$-th impression, and the pricing correction, respectively.
The $R_t^b$ incorporates the constraint-fulfillment penalty over the total advertising value after time step $t$. $R_t^p$ is designed so that the post-correction cost closely approximates the advertising value while satisfying the cost constraint: the base $\phi_t$ measures the alignment between post-correction cost and value over the whole period, while the exponent $\omega_t$ assesses the ability to recover both current and future pricing deviations by the period's end. This design ensures $R_t^p\in[0,1]$. \par
Note that $R_t^b$ uses the pre-correction cost rather than the post-correction cost: relying on the latter would let bidding over-spend while leaning on pricing to satisfy constraints, degrading allocation efficiency. Masking the post-correction status thus keeps bidding economically reasonable and preserves allocation efficiency.

\begin{algorithm} 
    \caption{Trajectory Generation Algorithm}
    \label{alg-trg}
    \begin{algorithmic}[1] 
    \REQUIRE Original Offline Dataset $\mathcal{D}_0$, Base policy with bidding action only $\pi_0$, correction window $h$
    \ENSURE $\mathcal{D}_{\text{joint}}$
    
    \STATE Select $\tau = \{(s_t, R_t^b, a_t^b)\}_{t=0}^T \in \mathcal{D}_0$
    \STATE Generate intermediate trajectory $\tau_{\text{int}} = \{(R_t^b, \hat{s}_t, \hat{a}_t^b, a_t^p, A_t)\}$
    
    \FOR{$t = 0, 1, \ldots, T$}
        \STATE Compute $a_t^p$
        \STATE State transition: $\hat{s}_{t+1} = f_{\text{env}}(\hat{s}_t, \hat{a}_t^b, a_t^p)$
        \STATE Generate the bidding action by the base policy: $\hat{a}_{t+1}^b = \pi_0\left( [\tau[0: t], \hat{s}_{t+1}] \right)$
        \FOR{$l = t, 1, \ldots, T$}
            \STATE State transition without pricing action: $s_{t+1} = f_{\text{env}}(\hat{s}_t, \hat{a}_t^b, \textbf{0})$
            \STATE Generate the bidding action by the base policy: $a_{t+1}^b = \pi_0\left( [\tau[0: t], s_{t+1}] \right)$
        \ENDFOR
        \STATE Compute $R_t^b(\hat{s}_t)$ under state $\hat{s}_t$ without employing pricing actions henceforth.
    \ENDFOR
    
    \FOR{$t = 0, 1, \ldots, T$}
        \STATE Compute $R_t^b, R_t^p$, and advantage value $A_t = R_t^p - R_t^b(\hat{s}_t)$
    \ENDFOR

    \STATE Generate augmented trajectory: $$\tau_{\text{joint}} = \{(R_t^b, R_t^p, \hat{s}_t, \hat{a}_t^b, a_t^p, A_t)\}$$
    
    \end{algorithmic}
\end{algorithm}

\subsection{Joint Decision Transformer with Energy-Based DPO Fine-tuning}
DT has been used as the backbone model in several generative auto-bidding works for its capability of capturing trajectory-wise information.
However, the self-attention module in causal transformer grants the bidding agent access to past KPI violation conditions, leading to allocation inefficiency.
Moreover, extended trajectories generated in the offline environment may cause further distribution shifting. 
To address these two challenges, we propose a joint decision transformer model with energy-based DPO fine-tuning, where the joint model is first trained with a supervised action loss and then fine-tuned with the energy-based DPO objective on the preference pairs produced by Algorithm~\ref{alg-trg}.

\subsubsection{Joint Decision Transformer}
Let the extended sequence $\tau \in \mathcal{D}_{joint}$ at time \(t\) be represented as:
\[
\tau_t = (s_t, R_t^{\text{b}}, R_t^{\text{p}}, A_t, a_t^{\text{b}}, a_t^{\text{p}}).
\] 
We first divide the extended trajectory \(\tau_t\) at time \(t\) into a \textbf{bidding trajectory} \(\tau^b_t\) and a \textbf{pricing trajectory}
\(\tau^p_t\) where 
\begin{equation}
\tau^b_t = (s_t, R_t^b, a_t^b),\,\,\, \tau^p_t = (s_t, R_t^p, a_t^p).
\end{equation}

The bidding trajectory is fed into a standard causal transformer~\cite{Vaswani2017AttentionIA}: $\mathbf{H}_b^{1:t}=\text{Transformer}_b([\mathbf{e}_1^b,\dots,\mathbf{e}_t^b])$, and the bid is read out from the last hidden state $\mathbf{h}_t^b=\mathbf{H}_b^{1:t}[-1]$ as $a_t^b=W_{\text{out}}^b\,\text{LayerNorm}(\mathbf{h}_t^b)$.

Similarly, the pricing trajectory is processed by a causal transformer followed by a cross attention module with the last hidden state of bidding transformer.

\noindent\textbf{Gate-Selected Cross-Attention for Coordinated Correction}
The pricing stream dynamically adapts based on bidding outcomes through GCA:
\begin{align}
\mathbf{g}_{t}^b &= \sigma(W_{GATE}(\mathbf{H}_b^{1:t})) \cdot \mathbf{H}_b^{1:t}\\
\mathbf{Q}_p &= W_Q^{\text{cross}} \cdot [\mathbf{h}_{t}^p, {g}_{t}^b] \\
\mathbf{K}_p &= W_K^{\text{cross}} \cdot \mathbf{h}_{t}^p \\
\mathbf{V}_p &= W_V^{\text{cross}} \cdot \mathbf{h}_{t}^p \\
\mathbf{c}_t &= \text{Softmax}\left(\frac{\mathbf{Q}_p\mathbf{K}_p^\top}{\sqrt{d_k}}\right)\mathbf{V}_p \\
\tilde{\mathbf{e}}_t^p &= \mathbf{e}_t^p + W_{\text{enhance}} \cdot \mathbf{c}_t
\end{align}
Here $\sigma$ is the sigmoid gate, letting the pricing module attend to the full bidding context rather than only the final bid.

The pricing stream mirrors the bidding stream on the enhanced embedding: $\mathbf{H}_p^{1:t}=\text{Transformer}_p([\mathbf{e}_1^p,\dots,\mathbf{e}_{t-1}^p,\tilde{\mathbf{e}}_t^p])$ and $a_t^p=W_{\text{out}}^p\,\text{LayerNorm}(\mathbf{h}_t^p)$.
In this dual-stream design, bidding decisions are made first and pricing corrections are conditioned on the realized bidding outcomes; this asymmetric coupling lets the pricing module adapt to the acquired traffic (premium strategies for high-value traffic, corrective measures otherwise) without contaminating the core bidding logic that sustains impression volume and market competitiveness.

To address the joint learning of bidding and pricing actions, we employ a supervised regression loss to align the model's outputs with the ground truth actions. The loss function is defined as:
\begin{equation}
\mathcal{L}_{\text{action}} = \frac{1}{T} \sum_{t=1}^T \left[ \eta_b  \cdot \| a_t^b - a_t^{b^*} \|^2 + \eta_p  \cdot \| a_t^p - a_t^{p^*} \|^2 \right],
\end{equation}
where $a_t^b$ and $a_t^p$ denote the model's predicted bidding and pricing actions at time step $t$, $a_t^{b^*}$ and $a_t^{p^*}$ are the corresponding ground-truth actions, and $\eta_b$, $\eta_p$ are weighting coefficients.

\subsubsection{Energy-Based DPO Fine-tuning.}
\par
We collect sample pairs for DPO fine-tuning using the procedure described in Algorithm~\ref{alg-trg}. At each decision step, we determine the positive (winner) sample $a^+$ and negative (loser) sample $a^-$ based on the advantage value $A_t = R_t^p - R_t^b(\hat{s}_t)$. Specifically, if the RTG increases after introducing the pricing action, we label the corresponding action as positive; otherwise, it is labeled as negative.

While standard DPO was developed for stochastic policies (LLMs) via categorical distributions, our bidding and pricing framework utilizes a deterministic policy that directly outputs continuous scalar values. To apply DPO in this deterministic regression setting without imposing artificial distribution assumptions on the model architecture, we derive the objective through an \textit{energy-based formulation}~\cite{LeCun2006ATO,Du2019ImplicitGA}.

\paragraph{Energy-Based Theoretical Derivation.}
In the context of continuous control, we define an energy function $E(a, y_{target})$ that quantifies the "cost" or incompatibility between a predicted action $a$ and a target value $y_{target}$. For our task, we adopt the L1 distance as the energy metric: $E(a, y_{target}) = |a - y_{target}|$. Lower energy implies higher compatibility. Under the Boltzmann distribution assumption commonly used in energy-based models, the implicit preference probability is proportional to the negative energy, i.e., $P(y_{target}|a) \propto \exp(-\frac{1}{\beta}E(a, y_{target}))$. Consequently, the standard DPO term, which represents the log-ratio of the policy to the reference model, can be reformulated as the relative energy gain:
\begin{equation}
\begin{aligned}
\log \frac{\pi_\theta(a|\cdot)}{\pi_{\text{ref}}(a|\cdot)} &\equiv -\frac{1}{\beta} \left( E(a, y_{target}) - E(a_{\text{ref}}, y_{target}) \right) \\
&= \frac{1}{\beta} \left( |a_{\text{ref}} - y_{target}| - |a - y_{target}| \right)
\end{aligned}
\end{equation}

This recasts the probabilistic DPO objective as a deterministic distance-minimization problem: the model $\pi_\theta$ is pushed to reduce its energy (error) relative to the frozen reference $\pi_{\text{ref}}$.

\paragraph{Loss Function.}
Writing the relative energy gain as a similarity score $S(a, y_{target}) = | a_{\text{ref}} - y_{target} | - | a - y_{target} |$ and substituting it into the Bradley--Terry model used by DPO, the loss for a positive/negative pair $(a^+, a^-)$ becomes:
\begin{equation}
    \mathcal{L}_{\text{DPO}} = -\mathbb{E}_{(s, a^+, a^-) \sim \mathcal{D}} \left[ \log \sigma \big( \beta \left( S(a, a^+) - S(a, a^-) \right) \big) \right],
\end{equation}
where $a = \mu_\theta(s)$ is the model's continuous scalar output; the objective drives $a$ toward $a^+$ and away from $a^-$, aligning the policy with high-reward regions.

\subsubsection{A Model-Agnostic Plug-in Framework}
We emphasize that JD-BP is not tied to the Decision Transformer backbone used in this paper; it is a \emph{plug-in, model-agnostic} framework that can be instantiated on top of stronger generative auto-bidding architectures such as GAS, GAVE, or GRAD. As established in Theorem~\ref{thm:equiv} and Corollary~\ref{cor:gap}, the benefit of decoupling value-maximizing bidding from historical constraint satisfaction is a property of the \emph{problem formulation} rather than of any specific policy parameterization---the equivalence and the $V_O^\star\le V_J^\star$ gap hold regardless of how the bidding policy is realized. Concretely, porting JD-BP to another backbone requires only three architecture-independent ingredients: (1) augmenting the action space with the pricing-correction term and designing the \emph{bid$\rightarrow$price interaction} (here realized by GCA, but any asymmetric coupling that conditions pricing on bidding outcomes applies); (2) replacing the conditioning/return signal with a \emph{memoryless} design so that future decisions are not dragged down by past KPI violations; and (3) optionally appending the energy-based DPO refinement. Because none of these components depends on the underlying sequence model, JD-BP can be layered onto existing state-of-the-art bidders to confer the same joint bidding--pricing capability.

\begin{table*}[t]
\centering
\renewcommand{\arraystretch}{1.0}
\setlength{\tabcolsep}{7pt}
\begin{tabular}{llcccccccccc|c}
\toprule
\textbf{Dataset} & \textbf{Budget} & \textbf{DiffBid} & \textbf{USCB} & \textbf{CQL} & \textbf{IQL} & \textbf{BCQ} & \textbf{DT} & \textbf{CDT} & \textbf{GAS} & \textbf{GAVE} & \textbf{GRAD} & \textbf{JD-BP} \\
\midrule
\multirow{5}{*}{AuctionNet}
 & 50\%  & 9.9  & 11.5 & 12.8 & 16.5 & 17.7 & 14.8 & 11.2 & 18.4 & 19.6 & \underline{20.0} & \textbf{21.2} \\
 & 75\%  & 15.4 & 14.9 & 16.7 & 22.1 & 24.6 & 22.9 & 18.0 & 27.5 & 28.3 & \underline{28.5} & \textbf{32.9} \\
 & 100\% & 19.5 & 17.5 & 22.2 & 30.0 & 31.1 & 29.6 & 31.2 & 36.1 & 37.2 & \underline{37.4} & \textbf{42.5} \\
 & 125\% & 25.3 & 26.7 & 28.6 & 37.1 & 34.2 & 34.3 & 31.7 & 40.0 & 42.7 & \underline{43.2} & \textbf{44.9} \\
 & 150\% & 30.8 & 31.3 & 35.8 & 43.1 & 37.9 & 44.5 & 39.1 & 46.5 & 47.4 & \underline{47.5} & \textbf{47.5} \\
\midrule
 & \textbf{Avg} & 20.2 & 20.4 & 23.2 & 29.8 & 29.1 & 29.2 & 26.2 & 33.7 & 35.0 & \underline{35.3} & \textbf{37.8} \\
\bottomrule
\end{tabular}
\caption{
Score comparison on AuctionNet (period-7, 48 advertisers) under five budget levels (50\%--150\% of the original budget). The highest score in each row is in bold; the best baseline is underlined. Higher is better.
}
\label{tab:scoreresults}
\end{table*}

\section{Offline Experiments}
\subsection{Setup}
\subsubsection{Datasets}
We evaluate our approach on the Alibaba open-source AuctionNet dataset~\cite{Su2024AuctionNetAN}, the largest publicly available resource in the auto-bidding domain, which records the prior estimated value and bid price of each advertiser in every ad request. Multiple requests can be aggregated over time intervals to reconstruct complete bidding trajectories. Since the original data does not contain pricing actions, we employ a PID controller to dynamically adjust the pricing process so that the true CPA fits the target tCPA. As established in our problem formulation in Eq.~\eqref{ori-pro}, optimizing for tCPA is mathematically equivalent to our formulated tROI constraint by taking the reciprocal of the target value, which allows us to directly apply our joint-decision framework to the tCPA-based dataset and to collect the bidding and pricing trajectories required for training.

Each ad request involves 48 advertisers competing under a CPM-based second-price mechanism. Following the evaluation protocol adopted by recent generative auto-bidding works~\cite{Gao2025GenerativeAW}, we report results on the \textbf{AuctionNet} setting, using the period-7 traffic with all 48 advertisers as the test set. During testing, we iterate through the 48 advertisers, assigning the algorithm under evaluation to each one in turn while the remaining 47 use the baseline strategy. To comprehensively assess performance across operating regimes, we evaluate every method under \textbf{five budget ratios} $\{0.5, 0.75, 1.0, 1.25, 1.5\}$ applied to the original advertiser budgets, and report the score at each ratio together with the average.

\subsubsection{Parameter Settings}
During training, we use the PyTorch framework on a single NVIDIA H100 GPU. The model is optimized with AdamW, a learning rate of $1\mathrm{e}{-4}$, and a weight decay of $1\mathrm{e}{-4}$. Each of the two decision streams (bidding and pricing) is a 4-layer Transformer with a hidden size of 128, 4 attention heads, and a feed-forward dimension of 512; the context length is set to $K=20$, giving roughly 2.1M parameters in total. The state dimension is 23, including features related to cost before correction, and the value weighting coefficients $\eta_b$ and $\eta_p$ are both set to 1. For DPO fine-tuning, we set $\beta$ to 0.15 and train for 3 epochs with a learning rate of $1\mathrm{e}{-5}$. We filter out trajectories whose advantage $A_t$ equals zero, and further exclude samples at decision step $t$ whose future target cost (computed as $\mathrm{TCPA} \cdot \sum_{i=t}^T v_i$) is less than $0.5$.

\subsubsection{Metrics}
We adopt the evaluation metric officially defined by AuctionNet, where the score combines value maximization with a penalty for violating the cost constraint:
\begin{equation}
Score =\min\Big(\big(\frac{\sum_{i=1}^T v_i}{\sum_{i=1}^T{c_i}}\big)^2, 1\Big) \cdot \sum_{i=1}^T v_i.
\end{equation}
A higher score is better.

\subsubsection{Baselines}
Following the protocol of recent works, we compare against representative offline-RL and generative auto-bidding methods, using the scores reported under the same AuctionNet setting.
\textbf{RL-based:} \textbf{USCB}~\cite{he2021unified} unifies bidding-parameter optimization via online RL; \textbf{CQL}~\cite{Kumar2020ConservativeQF} learns a conservative value function to curb overestimation; \textbf{IQL}~\cite{Kostrikov2021OfflineRL} uses expectile regression to avoid evaluating out-of-distribution actions; \textbf{BCQ}~\cite{fujimoto2019off} constrains the policy to dataset-supported actions.
\textbf{Generative:} \textbf{DiffBid}~\cite{Guo2024AIGBGA} generates bidding trajectories with diffusion models; \textbf{DT}~\cite{Chen2021DecisionTR} casts decision-making as conditional sequence modeling; \textbf{CDT}~\cite{liu2023constrained} extends DT with explicit constraint conditioning; \textbf{GAS}~\cite{li2025gas} augments sequence modeling with post-training search; \textbf{GAVE}~\cite{Gao2025GenerativeAW} adds value-guided exploration; \textbf{GRAD}~\cite{lei2026generative} is the state-of-the-art generative method and our strongest baseline.

\subsection{Overall Performance}
Table~\ref{tab:scoreresults} reports the score of JD-BP and all baselines across the five budget ratios. JD-BP attains the highest average score of \textbf{37.8}, outperforming the strongest baseline GRAD (35.3) by \textbf{+2.5}, and achieves the best (or tied-best) result in every budget setting. The advantage is most pronounced in the mid-budget regimes (e.g., 42.5 vs.\ 37.4 at ratio 1.0 and 44.9 vs.\ 43.2 at ratio 1.25), where the joint bidding--pricing policy can both capture high-value traffic and recover cost through pricing correction. Notably, the gain holds consistently from tight (0.5) to loose (1.5) budgets, demonstrating that JD-BP generalizes across operating regimes rather than excelling only in a single setting. These results confirm that explicitly modeling the pricing action as a controllable decision, on top of bidding, yields a substantial and robust performance improvement over both reinforcement-learning and generative baselines.

\begin{table}[t]
\centering
\renewcommand{\arraystretch}{1.1}
\resizebox{\columnwidth}{!}{%
\begin{tabular}{lcccccc}
\toprule
\textbf{Variant} & \textbf{50\%} & \textbf{75\%} & \textbf{100\%} & \textbf{125\%} & \textbf{150\%} & \textbf{Avg} \\
\midrule
\textbf{JD-BP (Full)} & \textbf{21.2} & \textbf{32.9} & \textbf{42.5} & \textbf{44.9} & \textbf{47.5} & \textbf{37.8} \\
\quad w/o DPO        & 18.6 & 27.6 & 38.9 & 43.7 & 47.4 & 35.2 \\
\quad w/o GCA        & 17.8 & 31.3 & 37.7 & 40.6 & 44.2 & 34.3 \\
\quad w/o m-RTG      & 20.7 & 29.4 & 37.4 & 38.3 & 43.5 & 33.8 \\
\bottomrule
\end{tabular}%
}
\caption{
Ablation of JD-BP on AuctionNet. Each row removes one component from the full model: DPO, the GCA module, or the memoryless RTG (m-RTG). Higher is better.
}
\label{tab:ablation}
\end{table}

\subsection{Ablation Studies}
To dissect the contribution of each component, we ablate three key designs of JD-BP---the memoryless RTG, the Gate-Selected Cross-Attention (GCA) module, and the DPO enhancement---and report the results in Table~\ref{tab:ablation}. Removing any single component degrades the average score monotonically, confirming that all three are effective; their importance ranks as memoryless RTG $>$ GCA $>$ DPO.

\noindent\textbf{Effect of the memoryless RTG (w.\ hisRTG):}
    This variant replaces the memoryless return-to-go with a history-aware RTG that subtracts the value already achieved, yielding a more conservative bidding signal:
    \begin{equation}
    R_t^b =\min\Big(\big(\tfrac{\sum_{i=1}^T v_i}{\sum_{i=1}^T{c_i}}\big)^2, 1\Big) \cdot \sum_{i=1}^T v_i - \min\Big(\big(\tfrac{\sum_{i=1}^{t-1} v_i}{\sum_{i=1}^{t-1}{c_i}}\big)^2, 1\Big) \cdot \sum_{i=1}^{t-1} v_i.
    \end{equation}
    It causes the largest degradation, dropping the average from 37.8 to 33.8 ($-4.0$). The history-aware signal makes the policy increasingly conservative as the episode proceeds, leading to budget under-utilization in the mid-to-high budget regimes (e.g., only 38.3 at ratio 1.25 vs.\ 44.9 for the full model). This validates the memoryless design, which keeps the target signal stable and lets the model fully exploit the available budget.

\noindent\textbf{Effect of the GCA module (w.o.\ GCA):}
    Removing the Gate-Selected Cross-Attention drops the average from 37.8 to 34.3 ($-3.5$). GCA explicitly conditions the pricing stream on the bidding outcomes, establishing a causal, asymmetric coupling (Bid $\rightarrow$ Price). Without it, the two decision streams are fully decoupled and cannot coordinate, so the pricing correction is less aligned with the realized bids. The consistent drop across all budgets shows that this explicit coupling, rather than relying on entangled implicit interactions, is important for robust joint decision-making.

\noindent\textbf{Effect of the DPO enhancement:}
    Comparing the base model with the full model isolates the contribution of preference optimization. DPO improves the average score from 35.2 to 37.8 ($+2.6$, $+7.4\%$), with the gain concentrated in the tighter budget regimes (e.g., 32.9 vs.\ 27.6 at ratio 0.75). This indicates that DPO effectively aligns the pricing policy toward high-efficiency trajectories, providing gains that are orthogonal to the architectural designs.

In summary, the ablation confirms our key design choices: the memoryless RTG enables full budget utilization, the GCA module couples pricing with bidding for coordinated decisions, and the DPO stage delivers further gains, together forming the complete JD-BP model.

\begin{figure}[t]
    \centering
    \includegraphics[width=0.9\linewidth]{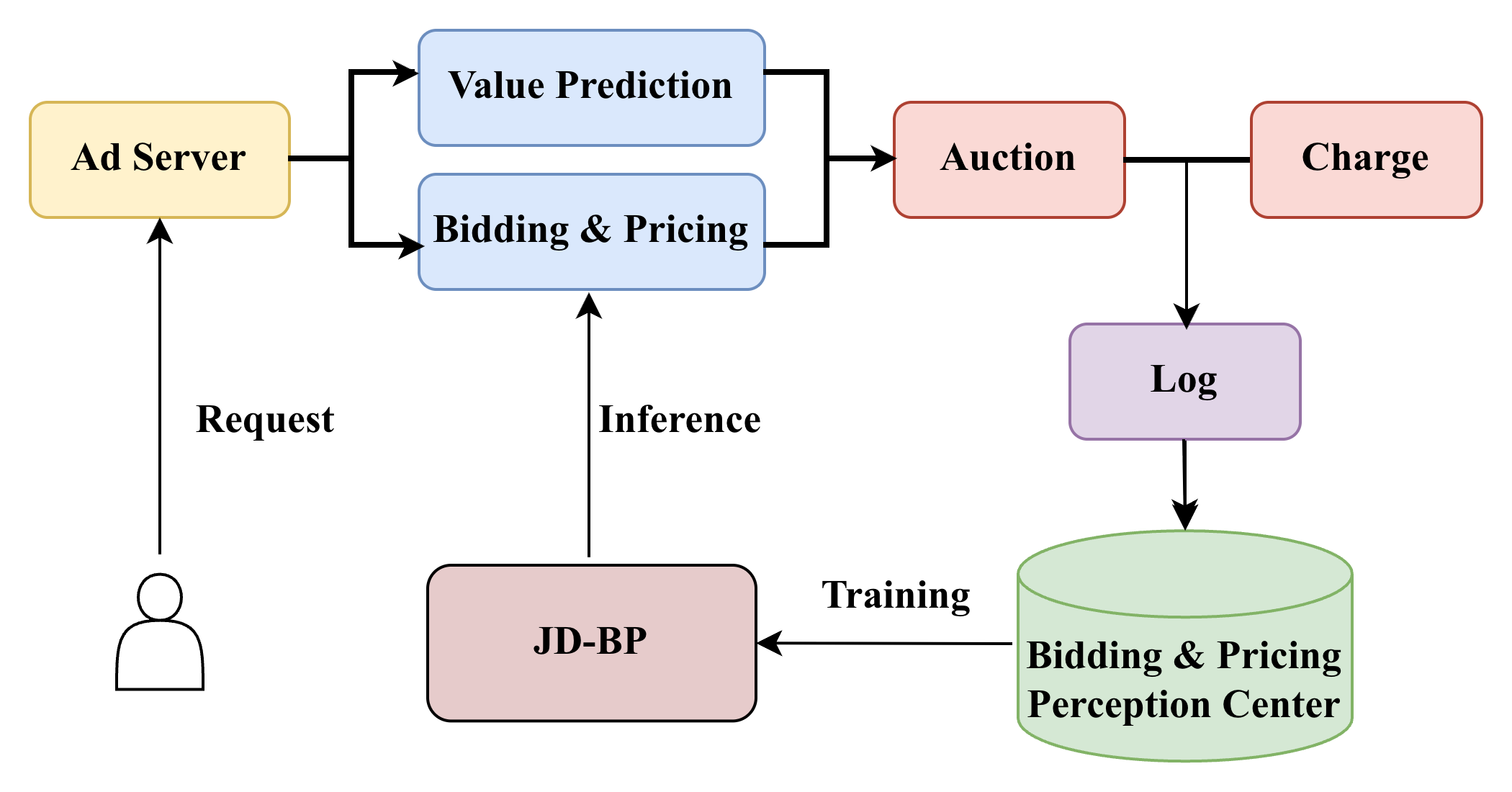}
    \caption{Online deployment workflow of the bidding and pricing system. User requests are processed by the Ad Server, followed by value prediction and bidding \& pricing modules. The results are submitted to the auction and charging components, and all operational data are recorded and monitored by the Bidding \& Pricing Perception Center for continuous optimization.}
    \label{fig:online_stru}
\end{figure}

\subsection{Online Deployment}
To validate JD-BP in real-world industrial systems, we deployed it on an advertising platform and conducted large-scale online A/B experiments; the online architecture is shown in Fig.~\ref{fig:online_stru}. On this platform, the revenue from target-bidding ads reaches tens of millions of RMB, which is sufficient to ensure reliable results. Because JD-BP introduces a pricing action for which no data exists in the initial state, the system was rolled out in \emph{two stages}, which conveniently doubles as an online ablation of the generative pricing model. \textbf{Stage~1} pairs a memoryless bidding policy with a \emph{rule-based PID pricing} controller, which also accumulates the joint bidding--pricing trajectories used to train JD-BP; \textbf{Stage~2} keeps the same memoryless bidding and replaces \emph{only} the PID controller with the full \emph{generative-joint} pricing of JD-BP, so that the Stage~1$\rightarrow$Stage~2 difference is a controlled ablation isolating the learned pricing. As Table~\ref{tab:online_results} shows, Stage~1 already improves substantially over the pre-deployment base, and, crucially, Stage~2 yields a further clear gain \emph{on top of this strong PID-pricing system} ($+1.38\%$ target cost, $+1.30\%$ ad revenue, $+0.29$pp achievement). This directly shows that the learned generative pricing contributes a distinct, positive increment beyond rule-based PID pricing in production---evidence for the necessity of the generative model that a rule-based controller alone cannot provide. Notably, Stage~1 already embodies two of our own contributions---the pricing-correction action and memoryless bidding---so its large gain itself validates the paper's core decoupling idea rather than any external baseline, while Stage~2 isolates the further value of \emph{learning} the joint policy generatively over a hand-tuned PID controller. Cumulatively, JD-BP improves ad revenue by $4.70\%$, target cost by $6.48\%$, and achievement rate by $3.92$pp over the base system.

\begin{table}[htbp]
    \centering
    \caption{Staged online A/B deployment, also serving as an online ablation of the generative pricing model. Stage~1 (memoryless bidding + rule-based PID pricing) is measured against the pre-deployment base; Stage~2 (the full generative-joint JD-BP, replacing PID pricing) is measured against Stage~1 and isolates the generative model's contribution; the last row is JD-BP's cumulative effect over the base. Achievement is the fraction of ads with CPA/TCPA in $[0.8,1.2]$.}
    \label{tab:online_results}
    \small
    \resizebox{\columnwidth}{!}{%
    \begin{tabular}{lccc}
        \toprule
        \textbf{Deployment stage} & \textbf{Ad Revenue} & \textbf{Target Cost} & \textbf{Achievement} \\
        \midrule
        Stage 1: Memoryless bidding + PID pricing & $+3.40\%$ & $+5.10\%$ & $+3.63$pp \\
        Stage 2: + Generative joint (JD-BP) & $+1.30\%$ & $+1.38\%$ & $+0.29$pp \\
        \midrule
        Overall (JD-BP vs.\ base) & $+4.70\%$ & $+6.48\%$ & $+3.92$pp \\
        \bottomrule
    \end{tabular}%
    }
\end{table}

\section{Related Works}
\subsection{Offline Reinforcement Learning for Auto-Bidding}
Reinforcement learning (RL) optimizes sequential decisions through interaction with an environment~\cite{Sutton1998ReinforcementLA}, with classic algorithms such as DQN~\cite{Mnih2015HumanlevelCT} and PPO~\cite{Schulman2017ProximalPO} succeeding via online exploration. In auto-bidding, however, direct exploration is costly or infeasible, so offline RL~\cite{Levine2020OfflineRL, Agarwal2019AnOP} learns policies solely from logged data such as historical impressions and conversion logs. To counter its characteristic distributional shift and overestimation bias, methods such as BCQ~\cite{Fujimoto2018OffPolicyDR}, CQL~\cite{Kumar2020ConservativeQF}, and MOPO~\cite{Yu2020MOPOMO} constrain policy updates, add conservative objectives, or learn environment models, enabling the safe and efficient policy improvement that auto-bidding requires. Building on these advances, a growing body of work applies RL directly to auto-bidding, casting budget pacing and constraint satisfaction as sequential control~\cite{Cai2017RealTimeBB,Ye2020DeepRL,mou2022sustainable} and unifying multiple KPI constraints within a single learned bidding controller~\cite{he2021unified}. These methods established RL as a competitive paradigm for constrained bidding, yet they still optimize the bid as the sole action under a fixed charging rule. Beyond these general-purpose algorithms, several works tailor offline RL to advertising, learning value critics to guide bid scaling under budget constraints or imposing conservative and expectile objectives to stay within the logged distribution~\cite{Kostrikov2021OfflineRL,Kumar2020ConservativeQF}; yet such value-based methods remain brittle under the long horizons and the sparse, delayed conversions of daily campaigns, which partly motivates the sequence-modeling formulations discussed next.

\subsection{Generative Methods}
Recently, generative models have demonstrated significant potential in the field of automated bidding. Mainstream approaches include Variational Autoencoders (VAE)~\cite{Kingma2013AutoEncodingVB}, diffusion models~\cite{Ho2020DenoisingDP}, and sequence modeling architectures such as Decision Transformer~\cite{Chen2021DecisionTR} and Trajectory Transformer~\cite{Janner2021OfflineRL}, which can effectively represent complex distributions or conditional relationships in bidding environments. Transformer-based frameworks leverage autoregressive mechanisms to capture high-dimensional dependencies within advertising platforms, as exemplified by models like GAVE~\cite{Gao2025GenerativeAW} and GAS~\cite{Li2024GASGA}. In parallel, diffusion models generate high-quality bidding samples through iterative conditional denoising processes~\cite{Guo2024AIGBGA,Li2025GenerativeAI,Peng2025ExpertGuidedDP}. These generative methods offer new avenues for optimizing bidding strategies and provide robust solutions to practical challenges like data sparsity and dynamic market conditions. Among them, Decision-Transformer variants are especially attractive for bidding: by conditioning on a target return and decoding actions autoregressively, they capture a campaign's temporal structure without the bootstrapped value estimation that often destabilizes value-based offline RL~\cite{Chen2021DecisionTR,liu2023constrained}.

More recent efforts further scale and refine this paradigm: GRAD~\cite{lei2026generative} trains a large-scale pre-trained generative model for ad bidding, GAS~\cite{Li2024GASGA} introduces a post-training search stage to refine generated strategies, CDT~\cite{liu2023constrained} augments the Decision Transformer with explicit constraint conditioning, and online-RL approaches such as USCB~\cite{he2021unified} unify the optimization of bidding parameters. Despite their architectural differences, all of these methods cast auto-bidding as learning a \emph{single} bidding action that must simultaneously maximize value and satisfy KPI constraints. In contrast, JD-BP is, to the best of our knowledge, the first to treat the charging mechanism as an additional controllable action, decoupling value maximization from historical constraint compensation. This design is orthogonal to the choice of backbone and can be layered on top of the above models.

\section{Conclusion}
We present JD-BP, a joint generative framework that optimizes both the bid and an additive pricing correction for online advertising auctions, addressing the misalignment caused by model uncertainty and feedback latency. By absorbing historical constraint violations through a memoryless Return-to-Go and the pricing channel, and refining the policy with energy-based DPO, JD-BP keeps bidding value-revealing while still satisfying KPI constraints. We further show, both theoretically and empirically, that this decoupling is never detrimental: relocating constraint compensation to a non-distortionary pricing channel provably improves the joint objective and, at the market level, enhances allocation efficiency, social welfare, and platform revenue simultaneously rather than trading one against another. Because these gains arise from the problem formulation rather than any specific backbone, JD-BP can be layered as a model-agnostic plug-in on top of stronger generative bidders. Experiments on the offline AuctionNet benchmark and online A/B tests on a large e-commerce platform confirm consistent gains in ad revenue and target cost, demonstrating its practical value for real-world bidding and pricing. Several directions remain open: extending the scalar refund into a richer pricing-correction action space, coupling the framework more tightly with alternative auction mechanisms such as reserve-price or first-price settings, and instantiating JD-BP on stronger generative backbones to further raise the ceiling demonstrated here. More broadly, we believe that treating the charging mechanism as a learnable decision opens a widely applicable axis for auto-bidding that extends well beyond the specific model studied in this work.

\printbibliography
\end{document}